\DeclareSymbolFont{rsfscript}{OMS}{rsfs}{m}{n}
\DeclareSymbolFontAlphabet{\mathrsfs}{rsfscript}
\newtheorem{teorema}{Theorem}
\newtheorem{lema}{Lemma}
\newtheorem{definicao}{Definition}
\begin{document}
	
	\begin{frontmatter}
		
		\title{The Reachability of Computer Programs}
		
		\author[inst1]{Reginaldo I Silva Filho}
		\ead{reginaldo.uspoli@gmail.com}
		
		\author[inst2]{Ricardo L Azevedo da Rocha\corref{au1}}
		\ead{rlarocha@usp.br}
		
		\author[inst1]{Camila Leite Silva}
		\ead{camila.leite002@gmail.com}
		
		\author[inst3]{Ricardo H Gracini Guiraldelli}
		\ead{ricardo.guiraldelli@univr.it}
		
		\address[inst1]{Department of Information Systems, Campus de Ponta Por\~a\\Universidade Federal de Mato Grosso do Sul - UFMS\\Rua Itibir\'e Vieira, s/n, BR 463, Km 4.5, 79907-414, Ponta Por\~a, MS, Brazil}
		
		\address[inst2]{Department of Computer Engineering, Escola Polit\'ecnica, Universidade de S\~ao Paulo\\Av. Luciano Gualberto, travessa 3, 380, 05508-900, Sao Paulo, SP, Brazil}
		
		\address[inst3]{Department of Computer Science, Universit\`{a} degli Studi di Verona\\Strada Le Grazie, 15, 37134, Verona, Italy}
		
		\cortext[au1]{Corresponding author}
		
		\begin{abstract}
			Would it be possible to explain the emergence of new computational ideas using the computation itself? Would it be feasible to describe the discovery process of new algorithmic solutions using only mathematics? This study is the first effort to analyze the nature of such inquiry from the viewpoint of effort to find a new algorithmic solution to a given problem. We define program reachability as a probability function whose argument is a form of the energetic cost (algorithmic entropy) of the problem.
			
			
		\end{abstract}
		
		\begin{keyword}
			Shannon Entropy, Program Reachability, Thermodynamics, Kolmogorov Complexity.
		\end{keyword}
		
	\end{frontmatter}

\section{Introduction}

``The Golden Egg was not as exciting as the goose that laid it.''
This phrase, uttered by Ray J. Solomonoff \cite{solomonoff2011algorithmic}, is the motto of this paper. We want to investigate the possibility of explaining the discovery of computational ideas using only physics, and computation itself. The starting point of this investigation lies in the question:

``How difficult it is for a programmer discover a new algorithmic solution for a problem?''

It is necessary to explain in what sense the words ``algorithmic solution'' and ``problem'' we are using here. To clarify the meaning of these terms, we will take the Fibonacci's series as an example. The first ten elements of it are completely well-described by the string:

\begin{equation*}
0,1,1,2,3,5,8,13,21,34
\end{equation*} 

The program that prints the string above is an \textbf{algorithmic solution} for the problem. The question ``how to generate the string that represents the first ten terms of the Fibonacci series?'' represents this latter problem. Simplifying the problem; we considered the string itself as the actual \textbf{problem}. Thus, find an algorithmic solution is to discover a program for (to print) a problem (string).

The aim of this paper is to analyze the phenomenon of new programs  emergence (discovery by a programmer) in the physical world. We follow the  train of thought contained in \cite{deutsch1985quantum, deutsch2000machines}, which argues that computers are physical objects, computations are physical processes, and they exist in the real-world. The fact that such appearance occurs in our material universe and not in an entirely symbolic space, disassociated from any physical meaning, is our primary \textbf{postulate}. We will show that the logical consequence of this physicalist argument is that the probability of a programmer to develop a new program obeys an underlying entropy principle, informational at first, but also thermodynamics; thus leading us to the concept of \textbf{algorithmic reachability}.

As initial motivation, we concentrate on minimum length program probability. The objective is to associate the (incomputable) minimum length program concept with our reachability investigation. We will show that the emergence of minimum programs (determined by the
Kolmogorov complexity) is a function of the energetic intrinsic cost to achieve it. 

This paper is structured as follows. Section \ref{sec:Back} describes the notations and theoretical preliminaries; Section \ref{sec:Mot} presents the motivation of this work; Section \ref{sec:reachConcept}  presents the programs reachability, and Section \ref{sec:ReachFormulation} describes its formulation. The Section \ref{sec:reachResults} discusses the results presented in the previous section and the last section presents the conclusion. 
\section{Background}\label{sec:Back}

In this section, we give several definitions and notations required for the adequate discussion of the present article. We assume that the reader is familiar with basics concepts of physics, calculus, and  probability. 

\subsection{General definitions}

\begin{definicao}[\textbf{Lambert $W$ function}]
	Let $\mathbb{R}$ be the sets of all \textbf{real numbers}. For $x \in \mathbb{R}$, the Lambert $W$ function 
	\cite{borwein1999emerging}  is defined as the inverse of the function $f(x) = xe^{x}$ and solves : 
	\begin{equation}
	W(x) e^{W(x)} = x. \label{eq:LambertW}
	\end{equation}	
	$W(x)$ has the following behavior:
	\begin{itemize}
		\item For $x \geqslant 0$, $W(x)$ is a real positive function.
		\item For $- 1/e < x < 0$, $W$ is a multivalued application with two negative real-valued branches: 
		$W_0(x)$ and $W_{- 1}(x)$ \cite{corless1996lambertw}.
		\item For $x = - 1/e$, $W(x) = - 1$.
		\item For $x = 0$, $W(x) = 0$.
		\item For $x < - 1/e$, $W(x)$ is not defined in $\mathbb{R}$.
	\end{itemize}

\end{definicao}

\begin{figure}[h]
	
	\center
	
	\includegraphics[width=10cm]{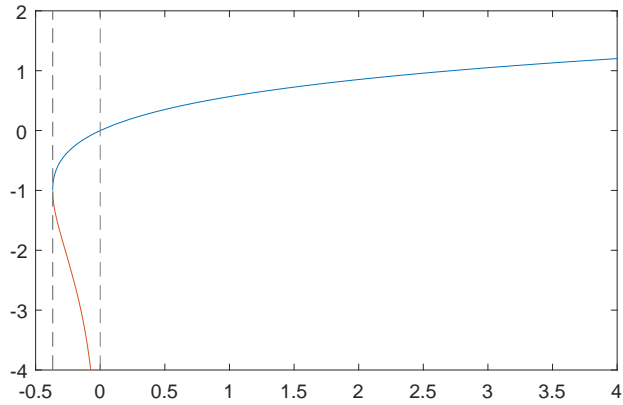}
	
	\caption{Graph of the Lambert W Function}
	
\end{figure}

\begin{definicao}[\textbf{Binary string and its length}]\label{def:Alfabeto}
	Let the binary alphabet  $B = \{0,1\}$ and $\mathbb{N}$ be the sets of all natural numbers. A \textbf{string}
	$\rho$ is any finite sequence of juxtaposed elements of the alphabet $B$ and its length $l(\rho)=n$
	is the number $n\in\mathbb{N}$ of the  elements composing $\rho$. 
	\end {definicao}
	\begin{definicao}[\textbf{Reflexive and Transitive Closure over the alphabet}]\label{def:RefTransClosure}
		Let $B^{k}=\{s:l(s)=k\}$ be the set of all strings with length $k$. The \textbf{reflexive and transitive closure over
			the alphabet} $B$ is defined as $B^{\ast}=\bigcup_{k=0}^{\infty}B^{k}$. In a similar way, 
		$B^{+}=\bigcup_{k=1}^{\infty}B^{k}$ defined as the \textbf{set of all nonempty strings over} $B$
	\end{definicao}
	\begin{definicao}[\textbf{Prefix String and Prefix free set}]\label{def:prefixString}
		Let the strings $r \in B^{\ast}$. In this context, a string $v$ is a \textbf{prefix of r} ($v \subseteq r $) if there exists
		$u \in B^{\ast}$ such that  $r = vu$. A set $F\subseteq B^{+}$ is called a \textbf{prefix-free set} when, $\forall (v, r \in F)$, $v \subseteq r$ is only true for $v = r$, . 
	\end{definicao}
	
	\subsection{Kolmogorov Complexity}
	
	\begin{definicao}[\textbf{Solution for $\rho$}] Given a universal Turing machine $\mathcal{U}$ and an input
		(program) in the form of a binary string $\varsigma$, if $\varsigma$ generates the string $\rho$ as output, so that:
		
		\begin{equation*}
		\mathcal{U}(\varsigma) = \rho
		\end{equation*}\\ then, we say that $\varsigma$ is a \textbf{solution} for the \textbf{problem $\rho$}.	
	\end{definicao}
	
	\begin{definicao}[\textbf{Chaitin machine and minimal program}]
		If a universal Turing machine $\mathcal{U}$ is defined for $\varsigma$ but not defined for any prefix $v$
		of $\varsigma$, then this machine is called a \textbf{Chaitin machine} \cite{solovay2000version} and denoted by $\mathfrak{U}$. Any
		$\varsigma$ accepted by $\mathfrak{U}$  is referred to as \textbf{a minimal program}. 
	\end{definicao} 
	
	\begin{definicao}[\textbf{$\rho$-solution set}]
		
		The set ${\Lambda}^{\rho } = \{\varsigma_{1}, \varsigma_{2},\dots, \varsigma_{n}\}$ of all minimal programs
		that are solutions of the problem $\rho$ is called a \textbf{$\rho$-solution set}. The set ${\Lambda}^{\rho }$
		is assumed to be finite.  
		
	\end{definicao}

	\begin{definicao}[\textbf{Kolmogorov complexity}]\cite[p.~110]{downey2010algorithmic}
		Given a problem $\rho$ with ${\Lambda}^{\rho}$, the Kolmogorov complexity of $\rho$ is defined as:  
		
		\begin{equation}\label{eq:kol}
		K(\rho) = min\{l(\varsigma_{i}):\mathfrak{U}(\varsigma_{i})=\rho \}	 
		\end{equation}	 
		Thus, the Kolmogorov complexity of $\rho$ is the $\rho$-solution set element with the lowest length. $\varsigma_{kol}$ now denotes this program. 
		
	\end{definicao} 
	
	\subsection {Boltzmann and Shannon entropy}
	
	\begin{definicao}[\textbf{Shannon's Entropy}]\label{def:shannonDef}\cite[p.~18]{gray1990entropy}
		For a finite sample space $\Omega$, let $X$ be a random variable taking values in $\Omega$ with probability
		distribution $p(x)$ for all $x \in X$. The \textbf{Shannon entropy} is defined as:
		\begin{equation}\label{eq:shannonEntr}
		H = - \sum_{x}p(x)\log_{2} p(x)
		\end{equation} 
		
	\end{definicao}	
	
	\begin{definicao}[\textbf{Boltzmann's Entropy}]
		For the special case of a discrete state space with L the counting measure defined as $L(\Gamma_x) = d(\Gamma_x)$, that is, the number of elements in the set $\Gamma_x$,
		\begin{equation}\label{eq:boltzEntr}
		SB(x) = (k \ln 2) \log d(\Gamma_x) 
		\end{equation}
		is the familiar form of the Boltzmann entropy \cite{li2013introduction}.
		
		The Gibbs entropy  is defined as: 	
		\begin{equation}\label{eq:gibbsEntr}
		SG = k \ln\Gamma 
		\end{equation}
		Where  $\Gamma$, which is the number of microstates corresponding to a given macrostate of a thermodynamic system \cite[p.~457]{kondepudi2014modern}, and  $k = 1.38065 \times 10^{-23}J/K$ is the Boltzmann constant.
	\end{definicao}
	
	\subsubsection{Relationship between Gibb's and Shannon's Entropy}.
	
	The relation between the Expression \ref{eq:shannonEntr} and \ref{eq:gibbsEntr} is expressed as follow \cite{cover2012elements,atlan1972organisation}:
	\begin{equation}\label{eq:relacShenTerm}
	SB = H k\ln2
	\end{equation}
	
	Following \cite[section 8.6]{li2013introduction}, the previous approaches (Gibbs, or Boltzmann's entropy and Shannon's entropy concepts) ``treated entropy as a probabilistic notion; in particular, each microstate of a system has entropy equal to $0$. However, it is desirable to find a concept of entropy that assigns a nonzero entropy to each microstate of the system, as a measure of its individual disorder.'' The last sentence means that Kolmogorov Complexity may be used to define the concept of algorithmic entropy.
	
	Considering that any measure of an individual microstate of a particular system has nonzero entropy and supposing that this system in equilibrium is described by the encoding $x$ of the approximated macroscopic parameters, one can estimate the entropy of the macrostate encapsulating the microstate. The algorithmic entropy of the macrostate of a system is given by $K(x) + H_x$, where $K(x)$ is the prefix complexity of $x$, and $H_x = \frac{SB(x)}{(k \ln 2)}$. Here SB(x) is the Boltzmann entropy of the system constrained by macroscopic parameters x, and k is Boltzmann's constant. The physical version of algorithmic entropy is defined as $SA(x) = (k \ln 2)(K(x) + H_x)$.
	
	Gibbs entropy is essentially the average algorithmic entropy. Let $H_{\mu}(x) = K(x) + O(1)$. Thus, the algorithmic entropy $H_{\mu}$ is a generalization of the prefix complexity $K$ \cite{li2013introduction}. The connection between Gibbs entropy and $H_{\mu}$ is given by the equation below \cite{li2013introduction}:
	\begin{equation}\label{eq:algEntropy}
	SA(w) = (k \ln 2) H_{\mu}(w), SA^{n}(w) = (k \ln 2) H_{\mu}^{n} (w)
	\end{equation}
	
	\subsection {Probability and physical work}
	
	\begin{definicao}[\textbf{Conditional Probability}]\cite[p.~22]{stark2002probability}
		Let $\Omega = (e_1, e_2, \dots, e_m)$ be a countable set of disjoint an exhaustive events, where each $e_i$
		is associated with a probability $P(e_i)$ such that $\sum_{i}P(e_i) = 1$ and $P(e_i)\neq 0$. Given $y$ as
		any event with $P(y)>0$, for all $i$: 
		\begin{equation}\label{eq:bayes}
		P(e_{i}|y) = \frac{P(y|e_{i})P(e_{i})}{P(y)}
		\end{equation}
		
	\end{definicao}	
	
	\textbf{The relationship physical work and Shannon entropy variation}. In a thermodynamic process \cite[p.~375]{sonntag1998fundamentals} with a transition taking a system from an initial state $1$ to some final state $2$, the work $\Delta W$ extracted in the course of the transition can be expressed regarding thermodynamic entropy as follow \cite{zurek1989algorithmic}:  
	
	\begin{equation}\label{eq:shannonWork1}
	\Delta W = T\Delta E
	\end{equation}
	
	Where $T$ is the temperature in Kelvin scale on Equation  \ref{eq:shannonWork1} which is valid for a process with a transition sufficiently slowly to be thermodynamically reversible and  with the internal energy of the  system considered  as constant. The variation $\Delta E$ is the difference between the entropy of the final and initial state\cite{lavenda2010new}:
	
	\begin{equation}
	\Delta E = S_2 - S_1
	\end{equation}
	
	By Expression \ref{eq:relacShenTerm}, we have:

	\begin{equation}\label{eq:shannonWork2}
	\Delta W = kT\ln2\Delta H
	\end{equation}
	
	
	\section{Motivation}\label{sec:Mot}
	
	In computer science, new programs are often designed. Some of them solve problems for which there was already a solution found. Some algorithms exist for centuries, some others are very recent, turning them part of the solution cluster for a determined class of computable problems. Others, on the other hand, provide more efficient solutions from the asymptotic analysis point of view, that is, on the amount of time or memory needed to get an output. 
	
	Regardless of the asymptotic complexity class, each computable problem has an algorithmic solution whose length in bits, prefix-free, is minimum, and that value is the \textbf{Kolmogorov complexity} equation~\ref{eq:kol} itself. The concept of Kolmogorov complexity relates to the principle of \textbf{Occam's Razor} \cite[p. 260]{li2013introduction}. We can interpret such principle as a method for selecting solutions, where the simplest case, among other explanations for the phenomenon under study, has to be chosen. Such principle presupposes the existence of a simplicity criterion in nature and Kolmogorov complexity appears as an objective measure to treat such simplicity. Unfortunately, for all string $\rho$, it is not possible to calculate $K(\rho)$, due to its incomputability.
	
	However, nothing prevents us from investigating the \textbf{occurrence probability} of $\varsigma_{kol}$, i.e. the probability that a programmer develops exactly the minimum program, the element of $\rho$-solution set ${\Lambda}^{\rho }$ with the smallest length in bits, whose length value is precisely the Kolmogorov Complexity $K(\rho)$.

	\subsection{Postulated Effects}
	
	In the introduction, we said that our main postulate is the fact that the discovery of new programs happens in the real-world. Although it is a postulate, we need to analyze most strictly the influence of this statement in our investigation of the minimum length program occurrence probability. 
	
	The expression of such probability involves the $\rho$-solution set ${\Lambda}^{\rho}$ and the derivation of Bayes' conditional rule (Expression \ref{eq:bayes}):
	
	\begin{equation}
	P(\varsigma_{kol}) = \frac{P(\varsigma_{kol}|\rho)\cdot P(\rho)}{P(\rho|\varsigma_{kol})}
	\label{eq:bayesappl}
	\end{equation}
	
	The postulated effects of the Expression \ref{eq:bayesappl} are given below:
	\paragraph{\textbf{Postulated Effect 1}}: the term $P(\rho|\varsigma_{kol})$ is the occurrence probability of  $\rho$  given the occurrence of $\varsigma_{kol}$. However it is necessary to put this term in a complete context: $P(\rho|\varsigma_{kol})$ is the probability of a Chaitin machine $\mathfrak{U}$, whose input is $\varsigma_{kol}$, generates the string $\rho$ as output. The program $\varsigma_{kol}$ is the minimum length  program that generates $\rho$. It generates $\rho$ and no other. It does not run forever, because $\varsigma_{kol}$ is an element of ${\Lambda}^{\rho}$ and not an arbitrary program. Thus, for $\mathfrak{U}(\varsigma_{kol})=\rho$,   $P(\rho|\varsigma_{kol}) = 1$. In this manner, Expression \ref {eq:bayesappl} takes the form :
	
	\begin{equation}
	P(\varsigma_{kol}) =  P(\varsigma_{kol}|\rho) \cdot P(\rho)
	\label{eq:bayes3}
	\end{equation}	
	
	\paragraph{\textbf{Postulated Effect 2}}: the term $P(\varsigma_{kol}|\rho)$ is the occurrence probability of $\varsigma_{kol}$ given the occurrence of $\rho$. At the same time, $\mathfrak{U}(\varsigma_{i})=\rho$ (for every $\varsigma_i \in {\Lambda}^{\rho}$), which means that $p(\varsigma_{kol}|\rho)$ represents the probability that the Chaitin machine $\mathfrak{U}$ have generated the
	string $\rho$ from the occurrence of the input $\varsigma_{kol}$. 	
	
	The construction (or reification) of a program comes from after  a problem attestation by a  witness. The fact that problem $\rho$
	can be described as a binary string is a consequence of it being noticed and somehow recorded by a group of witnesses that agree among themselves about its existence, and also how to represent it. Therefore, we can only talk about the situations
	where there is absolute certainty of the occurrence of $\rho$, that is, the situation where this occurrence is a prerequisite, where we have: 
	
	\begin{align} \label{post}
	&P(\varsigma_{kol}|\rho) = P(\varsigma_{kol})\\
	&P(\varsigma_{kol}|\neg \rho) = 0\\
	&P(\rho) = 1
	\end{align} 
	
	Thus, we have that the occurrence probability of $\varsigma_{kol}$ is always an \textit{a posteriori} probability, conditioned by the problem of generating $\rho$ as an output of $\mathfrak{U}$ by $\varsigma_{kol}$ input. 	 
	
	\section{The reachability of a program}\label{sec:reachConcept}
	
	However, what does to calculate the probability of a programmer to develop the program $\varsigma_{kol}$ mean, in a semantic sense?  Programs are not balls in a ballot box. They are not cards of a deck, dice faces or $\alpha$-particles shocking against a photographic plate. The Kolmogorov's complexity study about $\rho$ randomness nature is independent of the way $\varsigma_{kol}$ can be obtained \cite{grunwald2004shannon}. A raffle, a toss of a coin, or occurrence frequency calculus does not give the presence or absence of  $\varsigma_{kol}$, and it neither appears spontaneously in the input tape of the Chaitin machine $\mathfrak{U}$.
	
	The objective of a program is to solve a problem. There is a pragmatical necessity that the developed program generates a string $\rho$. Thus, the existence of $\varsigma_{kol}$ already implies (by its definition) the occurrence of a nontrivial phenomenon of compression that led to its description from $\rho$. To $\varsigma_{kol}$ be the input of $\mathfrak{U}$ there exist a sequence of events involved in the conception of the program and its constructive generation. This situation shows to us that the probability of a programmer developing a new algorithmic solution to a given problem correlates with the \textbf{necessity} notion, in an Aristotelian sense.
	
	Necessary is that which can not be otherwise \cite[p.~468]{Halper2008-HALOAM}. The necessity is what makes it impossible for something to be other than it is \cite[p.~120]{aristotle1998metaphysics}. Necessity is the reverse of the \emph{impossibility}. To illustrate the concept of impossibility, take the following illustration: a man can return to his childhood home and remodel it so that it looks, as much as possible, like what he has in mind, but he still cannot go back in the past. Hence, going backward to relive his past consists in an impossibility.
	
	The path followed by the programmer to find a new algorithmic solution lies under necessity constraints. In this sense, the Occam's razor is necessity principle, as well Shannon's entropy $H$ is a measure of the \emph{average} impossibility \cite[p.~141]{ben2008farewell}. Thus, there is a necessary way for a programmer to develop a new program and, at this moment, we need a clear connection between occurrence probability and necessity concept, to get the fact and showing how much it is necessary, what needs to happen and cannot be otherwise. This connection element is the $P(\varsigma_{kol})$ probability \ref{post}. We will not associate its meaning with a raffle phenomenon, but we will see it as a measure of the \textbf{program reachability}.
	
	
	\section{The program reachability calculus} \label{sec:ReachFormulation}
	
	In this way, we define the program reachability of $\varsigma_i \in \Lambda^{\rho}$ as the quantifiable relative necessity
	of an input $\varsigma_i$ for a Chaitin machine $\mathfrak{U}$ when $\varsigma_i$ generates $\rho$ as its output. Such relative necessity is represented by $P(\varsigma_{kol})$. However, a probability that cannot be expressed has no purpose. Thus, it is necessary to describe a way to obtain the program reachability expression regarding necessity or impossibility. The Shannon entropy, defined in
	\ref{def:shannonDef} will provide us this way. For this, let us take the random variable $X$ constituted by the enumeration elements of ${\Lambda}^{\rho}$. The image space of $X$ is expressed by:
	
	\begin{equation}
	I_X = \{1,2, \dots, i, \dots, m\}
	\end{equation}
	
	For such random variable, with respective probabilities $p(1),p(2),\dots,p(m)$, we have the Shannon entropy associated, given by the expression \ref{eq:shannonEntr}:
	
	\begin{equation}\label{eq:componente2Shann}
	H = - [p(1)\log_{2} p(1) + \dots +p(i)\log_{2} p(i)+\dots+p(m)\log_{2} p(m)]
	\end{equation} 
	
	Now, let us isolate the sum component concerning to $x = i$:
	
	\begin{equation}\label{eq:componente3Shann}
	H + [p(1)\log_{2} p(1) + \dots + p(m)\log_{2} p(m)] = - p(i)\log_{2} p(i)
	\end{equation} 
	
	At this point, it is important to note that the component
	
	\begin{equation}\label{eq:parcial result}
	[p(1)\log_{2} p(1) + \dots+p(i-1)\log_{2} p(i-1)+p(i+1)\log_{2} p(i+1)+\dots +p(m)\log_{2} p(m)]
	\end{equation}
	
	The right side of equation \ref{eq:componente3Shann} refers to the entropy of a random variable $X'$, which represents the enumeration of $A^\rho$, without its $i$th element. The  $X'$ space image description may be expressed as:
	
	\begin{equation}
	I_{X'} = I_X - \{i\}
	\end{equation}
	
	Such variable also has its own Shannon entropy, and the expression of this entropy is the component $[p(1)\log_{2} p(1) + \dots +p(m)\log_{2} p(m)]$,
	which will call $H'$. Thus, replacing the expression \ref{eq:componente3Shann}, we have:
	
	\begin{equation}\label{eq:componente4Shann}
	p(i)\log_{2} p(i) = -(H - H')
	\end{equation} 
	
	Another way to envision equation~\ref{eq:componente4Shann} is to consider it as describing the energy fluctuations by means of the difference on the entropy of the function that describes the program $\varsigma_{i}$, such that $p(i)\log_{2} p(i) = -\Delta(H_{\varsigma_{i}})$.
	
	Equation \ref{eq:componente4Shann} is non-linear. To determine $p(i)$ is necessary to deal with this non-linearity. Lambert $W$ function, described in Definition~\ref{eq:LambertW}, is an important mathematical tool which allows the analytic solutions for a broad range of mathematical problems \cite{brito2008euler}, including the solution of equation $x\log_{a} x = b$ (for $x, a, b \in \mathbb{R}$), which is \cite{corless1993lambert}:
	
	\begin{equation}
	x\log_{a} x = b \Longleftrightarrow x=e^{W(a\ln b)}
	\end{equation}
	
	Moreover, this is exactly the form of Equation \ref{eq:componente4Shann}. To simplify the analysis, we will denominate the expression $(H - H')$, the entropy variation, as $\mathrsfs{H}(\varsigma_{i})$. Notwithstanding, $p(i)$ represents the occurrence probability of the event $\varsigma_{i}$. Thus, with proper algebraic manipulation, \emph{for all} $\varsigma_i \in \Lambda^{\rho}$ (including the program $\varsigma_{kol}$)  we have:
	
	\begin{equation}\label{eq:componente6Shann}
	P(\varsigma_{i}) = \exp^{W(-\mathrsfs{H}(\varsigma_{i})\ln2)} 
	\end{equation}   
	
	As the entropy is positive, the domain of the function will be negative, which will yield a negative number for the $W$ function. Hence taking the $W_{-1}$ branch we'll rewrite it as:
	\begin{equation}\label{eq:probReach}
	P(\varsigma_{i}) = \exp^{-\|W_{-1}(-\mathrsfs{H}(\varsigma_{i})\ln2)\|}
	\end{equation} 
	
	Nevertheless we have an equation for $P(\varsigma_{i})$ that maps a semi-measure for the probability distribution over the finite event set, and it yields values ranging from $0 < P \leq 1$. Moreover, we can normalize it to become an actual measure for the probability stating: $P(\varsigma_{i}) = \frac{\exp^{-\|W_{-1}(-\mathrsfs{H}(\varsigma_{i})\ln2)\|}}{\sum_{i} P(\varsigma_{i})}$. In this case we actually have a probability distribution measure for the events, and $\sum_{i} P(\varsigma_{i}) \leq 1$.
	
	\section {Results discussion}\label{sec:reachResults}
	%
	
	\subsection{Bounds on reachability function domain}
	
	We need to analyze the behavior of $P(\varsigma_{i})$, impose no restrictions other than those typical of the analyzed function. Because the characteristics of Lambert $W$ function described in \ref{eq:LambertW}, the following domain restrictions for function $P(\varsigma_{i})$  are needed:
	
	\begin{equation}\label{eq:componente9Shann}
	-\dfrac{1}{e}\leq-\mathrsfs{H}(\varsigma_{i})\ln2 \leq 0
	\end{equation}\\ which implies:
	
	\begin{equation}\label{eq:componente11Shann}
	0\leq\mathrsfs{H}(\varsigma_{i}) \leq \dfrac{1}{e\ln2}
	\end{equation}

	Since $\frac{1}{e \ln 2} \approx 0.5307$, we may rewrite Inequality \ref{eq:componente11Shann} as

	\begin{equation*}
	0 \leq \mathrsfs{H}(\varsigma_{i}) \lessapprox 0.5307
	\end{equation*}

	\begin{figure}[h] 
		
		\center
		
		\includegraphics[width=10cm]{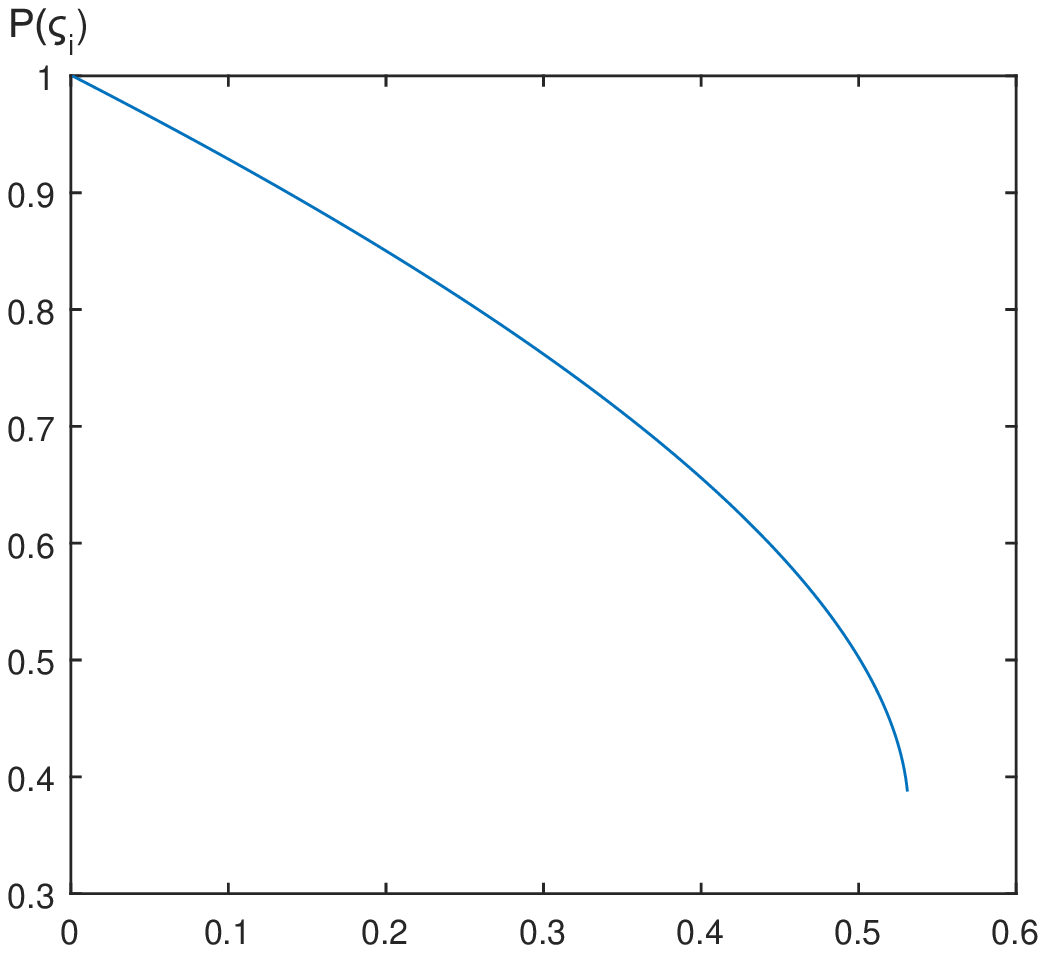}
		
		\caption{Graph of the function $P(\varsigma_{i})$. In $x$-axis, we have the values of $\mathrsfs{H}(\varsigma_{i})$}
		
	\end{figure}\label{fig:ProbGraph}


	Consider equations (\ref{eq:componente4Shann}) and (\ref{eq:probReach}) regarding a learning process for the algorithm reachability. We can apply equation (\ref{eq:probReach}) to reach a better program in a similar way the gradient descent algorithm is applied to machine learning algorithms. There is an algorithmic entropy associated with the process of learning \cite{li2013introduction,Baez2010,Betti2016}.
	
	One important feature of this approach on the Principle of Least Cognitive Action \cite{Betti2016} is that gradient descent algorithm can be derived from the principle.
	
	In this sense, we envision the exponential of the Lambert W function as a matching loss for the learning algorithm \cite{auer1996}, but it is not convex, besides being continuous, for the domain \cite[Proposition $5$]{borwein2016}.
	
	Since Lambert $W$ function is not convex, but continuous for the domain $x < 0$; our proposal for the probability semi-measure is $P(\varsigma_{i}) = \exp^{-\|W_{-1}(-\mathrsfs{H}(\varsigma_{i})\ln2)\|}$, which is bounded by the negative exponential of the Lambert's $W$ function. Proposition 7 from~\cite{borwein2016} deals with it:
	
	\begin{lema}\label{lem:convex}
		$\exp^{- W(x)}$ is convex. (This is proposition 7 from \cite{borwein2016})
	\end{lema}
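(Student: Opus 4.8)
The plan is to show directly that $g(x) := e^{-W(x)}$ has nonnegative second derivative on the domain where $W$ is real (i.e. $x \geq -1/e$ on the principal branch, and the analogous ray for $W_{-1}$), by differentiating the defining identity $W(x)e^{W(x)} = x$ from Definition~\ref{eq:LambertW}. First I would record the standard derivative of the Lambert function: differentiating $W e^{W} = x$ gives $W'(x)\,e^{W}(1 + W) = 1$, hence
\begin{equation*}
W'(x) = \frac{1}{e^{W(x)}\,(1 + W(x))} = \frac{W(x)}{x\,(1 + W(x))},
\end{equation*}
valid wherever $W(x) \neq -1$, which excludes only the branch point $x = -1/e$. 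This is the one external fact I need, and it is elementary.

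Next I would compute $g'(x)$ and $g''(x)$ by the chain rule. From $g(x) = e^{-W(x)}$ we get $g'(x) = -W'(x)\,e^{-W(x)}$, and substituting the expression for $W'$ yields a clean closed form such as $g'(x) = -\dfrac{1}{e^{2W(x)}(1+W(x))}$. Differentiating once more and again eliminating $W'$ via the formula above, everything collapses to a rational expression in $W(x)$ times the positive factor $e^{-2W(x)}$ (or $e^{-3W(x)}$); the goal is to massage $g''(x)$ into the form $g''(x) = e^{-cW(x)} \cdot \dfrac{q(W(x))}{(1+W(x))^{3}}$ for an explicit low-degree polynomial $q$. Then convexity reduces to a sign analysis: on the principal branch one has $W(x) \geq -1$ with $1 + W(x) \geq 0$, and on the $W_{-1}$ branch one has $W_{-1}(x) \leq -1$ with $1 + W_{-1}(x) \leq 0$, so $(1+W)^3$ changes sign between the branches; I would check that the numerator $q(W)$ carries a compensating sign on each branch so that the quotient, hence $g''$, stays $\geq 0$ throughout. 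Since the paper only ever uses the $W_{-1}$ branch for the reachability semi-measure $P(\varsigma_i) = \exp^{-\|W_{-1}(-\mathrsfs{H}(\varsigma_i)\ln 2)\|}$, it suffices for our purposes to verify the sign on that branch, though the principal-branch case is no harder.

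The main obstacle I anticipate is purely bookkeeping: after two applications of the chain rule and two substitutions of the $W'$ identity, the intermediate expressions are unwieldy, and the simplification to the advertised quotient form requires patiently clearing the nested $e^{W}$ factors and common denominators; a sign slip there would invalidate the conclusion. A secondary subtlety is the branch point $x = -1/e$, where $W'(x)$ blows up and the formula for $g''$ is not directly valid — there one argues separately, e.g. by a one-sided limit or by noting $g$ extends continuously and the convexity inequality $g(\lambda x + (1-\lambda)y) \leq \lambda g(x) + (1-\lambda)g(y)$ passes to the closure by continuity of $g = e^{-W}$ (recall $W$ is continuous on its domain, as stated in Definition~\ref{eq:LambertW}). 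Once $g'' \geq 0$ on the open domain and $g$ is continuous up to the endpoint, convexity on the whole domain follows, which is exactly the statement of Lemma~\ref{lem:convex}.
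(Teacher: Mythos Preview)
The paper does not actually prove this lemma; its entire proof is the pointer ``See proof of proposition 7'' in the cited Borwein reference. Your direct second-derivative computation is therefore a genuinely different and self-contained route. Carrying it through explicitly gives
\[
g''(x) \;=\; \frac{e^{-3W(x)}\bigl(2W(x)+3\bigr)}{\bigl(1+W(x)\bigr)^{3}},
\]
and on the principal branch $W=W_0\ge -1$ one has $2W_0+3\ge 1>0$ and $(1+W_0)^3\ge 0$, so $g''\ge 0$ and convexity follows exactly as you outline; your continuity argument at the branch point $x=-1/e$ is also sound. So for the lemma as stated your plan works and is more informative than the paper's bare citation.

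There is, however, a real gap in your side remark about the $W_{-1}$ branch. You anticipate that the numerator $q(W)$ will ``carry a compensating sign'' there, but with $q(W)=2W+3$ it does not: for $W_{-1}(x)\in(-\tfrac{3}{2},-1)$, equivalently $x\in\bigl(-1/e,\,-\tfrac{3}{2}e^{-3/2}\bigr)$, one has $2W+3>0$ while $(1+W)^3<0$, hence $g''<0$ and $e^{-W_{-1}}$ is \emph{concave} on that subinterval. So the very branch you single out as ``sufficing for our purposes'' is the one on which your sign check fails. This does not threaten the lemma itself---read as the cited Borwein result, it concerns the principal branch---but you should drop the claim that the $W_{-1}$ case ``is no harder,'' since on that branch the conclusion is simply false.
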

	
	\begin{proof}
		See proof of proposition 7 on \cite{borwein2016}
	\end{proof}
	
	Hence we have:
	
	\begin{teorema}\label{th:loss}
		Let $\ell$ be a loss function as $\ell = \exp^{- W(x)}$, then $\ell$ is a matching loss.
	\end{teorema}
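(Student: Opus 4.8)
The plan is to recall the Kivinen--Warmuth / Auer--Herbster--Warmuth notion of a \emph{matching loss} and then check that $\ell=\exp^{-W(x)}$ realises it, with Lemma~\ref{lem:convex} supplying the one analytic ingredient that is not routine. Recall that, given a strictly increasing differentiable \emph{transfer} (link) function $\psi$ on an interval $\mathcal I\subseteq\mathbb{R}$ together with a convex antiderivative $\Phi$ of it (so $\Phi'=\psi$), the matching loss of $\psi$ in the sense of \cite{auer1996} is the Bregman divergence generated by $\Phi$,
\begin{equation*}
L_\psi(u,v)=\Phi(u)-\Phi(v)-\psi(v)\,(u-v)=\int_{v}^{u}\bigl(\psi(z)-\psi(v)\bigr)\,dz .
\end{equation*}
The properties that qualify $L_\psi$ as a loss --- non-negativity, vanishing exactly on the diagonal $u=v$, and (the feature exploited by the gradient-descent argument) convexity in the first argument --- are all immediate consequences of a single fact: the generating potential $\Phi$ is convex, equivalently $\psi=\Phi'$ is monotone. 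So it suffices to exhibit, on the relevant domain, a convex $C^{1}$ function whose derivative is strictly increasing; that function then generates a matching loss in the sense above.

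First I would fix the domain $\mathcal I$ to be the interval on which Lemma~\ref{lem:convex} applies, i.e. an interval where the relevant branch of $W$ is a real, single-valued, smooth function (for the reachability application of \eqref{eq:probReach} this is the portion of $(-1/e,0)$ seen through $W_{-1}$). On $\mathcal I$ set $\Phi(x):=\ell(x)=\exp^{-W(x)}$. By Lemma~\ref{lem:convex} (Proposition~7 of \cite{borwein2016}) $\Phi$ is convex on $\mathcal I$, and differentiating $W(x)e^{W(x)}=x$ from \eqref{eq:LambertW} gives $W'(x)=W(x)/\bigl(x(1+W(x))\bigr)$, so $\Phi$ is $C^{1}$ with
\begin{equation*}
\psi(x):=\Phi'(x)=\ell'(x)=-\exp^{-W(x)}\,W'(x)=-\,\frac{\exp^{-W(x)}\,W(x)}{x\bigl(1+W(x)\bigr)} .
\end{equation*}
Convexity of $\Phi$ already makes $\psi$ non-decreasing; from this closed form I would check that $\psi$ is in fact strictly increasing on the interior of $\mathcal I$, hence a genuine transfer function. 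Then $L_\psi(u,v)=\Phi(u)-\Phi(v)-\psi(v)(u-v)$ is, by the construction recalled above, the matching loss of the link $\psi=\ell'$, and its generating potential is exactly $\ell=\exp^{-W(x)}$; that is the assertion.

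The step I expect to be the real obstacle is not the convexity --- Lemma~\ref{lem:convex} hands that to us --- but upgrading it to \emph{strict} monotonicity of the link $\psi=\ell'$, since ``matching loss'' presupposes an invertible transfer function and hence $\Phi''>0$ strictly on the interior of $\mathcal I$. I would extract this either from the sign analysis inside the proof of Proposition~7 in \cite{borwein2016}, or directly by differentiating the closed form for $\psi$ and verifying that the resulting expression keeps a constant strict sign once the inequalities bounding the chosen branch of $W$ and the sign of $x$ on $\mathcal I$ are used. The only other subtlety is the endpoint $x=-1/e$, where $W=-1$ makes $1+W$ vanish and the formula for $\psi$ is singular even though $\ell$ itself extends continuously there (with value $e$, since $W(-1/e)=-1$); I would simply run the argument on the open interval and, if a link value at the endpoint is wanted, recover it as a one-sided limit. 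With convexity secured, non-negativity of $L_\psi$ and its vanishing on the diagonal are automatic from the Bregman form, so nothing further is needed.
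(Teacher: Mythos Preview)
Your argument and the paper's run along the same rails---write the loss in Bregman form, invoke Lemma~\ref{lem:convex} for the convexity, and close with the Auer--Herbster--Warmuth criterion---so in that sense you have recovered the intended proof. The one structural difference worth flagging is the role you assign to $\exp^{-W}$. The paper takes $y=f(z)=\exp^{-W(z)}$ as the \emph{transfer} function, writes down its inverse, and forms the two-argument loss
\[
\ell(\hat y,y)=\exp^{W(\hat z)}-\exp^{W(z)}-f'(z)\,(\hat z-z),
\]
then appeals to Lemma~\ref{lem:convex} for convexity in $\hat z$ and to Theorem~5.1 of \cite{auer1996} for the matching-loss conclusion. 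You instead take $\Phi=\exp^{-W}$ as the convex \emph{potential} and let $\psi=\Phi'$ be the transfer, so that the matching loss is the Bregman divergence generated by $\Phi$.

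Your reading makes the appeal to Lemma~\ref{lem:convex} land exactly where it is needed (convexity of the generating potential is precisely the hypothesis the Bregman construction consumes), and you also supply two points the paper's proof leaves implicit: the upgrade from convexity of $\Phi$ to \emph{strict} monotonicity of $\psi$, and the handling of the endpoint $x=-1/e$ where $1+W$ vanishes. Both approaches are acceptable resolutions of the somewhat ambiguous statement (a one-variable $\ell$ being called ``a matching loss''); yours is tidier, while the paper's is terser and leans more heavily on the cited references.
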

	
	\begin{proof}
		From Property $(P1)$ of~\cite{auer1996} we have this strong claim: $\ell(\hat{y},y): \mathbb{R} \times \mathbb{R} \rightarrow [0,\infty)$ is continuous and bounded. And from Theorem 5.1 of~\cite{auer1996} if a loss function follows the property then it is a matching loss.
		
		To prove the theorem we make:
		\begin{description}
			\item $y = f(z) = \exp^{ - W(z)}$
			\item $z = f^{-1}(y) = W^{-1}(\frac{1}{\exp^{-y^{-1}}}) = W^{-1}(\frac{1}{\ln(y)})$
			\item $\ell(\hat{y},y) = \ell(f(\hat{z}),f(z)) = \exp^{W(\hat{z})} - \exp^{W(z)} - f^{'}(z) (\hat{z} - z)$
		\end{description}
		which is convex in $\hat{z}$ by Lemma~\ref{lem:convex} (Proposition $7$ of \cite{borwein2016}). Hence, by Theorem 5.1 of~\cite{auer1996}, $\ell$ is a matching loss function.
	\end{proof}
	
	With these equations, our approach yields a theoretically sound method for the reachability calculus for a program. Nonetheless, Lambert W function is not entirely computable, for instance, $W(1) = \Omega$, Chaitin's incomputable constant.
	
	The simpler algorithm to reach a program builds all programs of the same size and checks them all; afterward, switches them to an immediately smaller size. On a subsequent step, the algorithm produces the smallest program only when there is no transformation left to reduce the size unless there is an informed hypotheses search-space. This process produces a meaningful result that is: the smallest program is not necessarily the most accessible one.
	
	Consider now that one method that defines an algorithm which is guided by the least energy cost given by Equation~\ref{eq:probReach}, applying a gradient descent. From this method we may argue its connection with the well-known Levin-Search algorithm \cite{levin1973,li2013introduction} since we also provide a search method (reachability), but for a program.
	
	\subsection{The Demiurge} 
	
	The connection between the concepts of probability and necessity does not eliminate the question of spontaneous generation of programs: after all, who is responsible for creating new programs, since their creation does not originate from throwing dices or tossing coins? Thus, beyond the concept of reachability, we need a constructive cause, a responsible agent, a being that we call \textbf{\textit{Demiurge}}. Consider this being as a version of Maxwell's demon. We can also understand the following digression from the arguments and measures exposed in \cite{berut2012experimental}.
	
	The philosopher Plato (in his dialog \textit{Timaeus}) \cite[p.~120]{ross1951plato} described the Demiurge (which in Greek means ``craftsman'') as a world-generation  entity, sometimes represented as endowed with only limited abilities. The Demiurge shapes the cosmos  via  imposing pre-existing form on matter, according to some ideal and perfect model \cite{o2015demiurge}. However, although it has such capacity, the Demiurge is not omnipotent, omnipresent and omniscient.
	
	The use of Demiurge in this paper has no intention of inserting any supernatural element in this discussion. The use of this term allows us to establish a cause for the generation of programs without the need of discussing the nature of this cause  (for instance, whether the Demiurge is a machine or not), only its functions. In principle, the Demiurge is a physical system.
	
	Thus, our Demiurge (represented by symbol $\mathfrak{D}$) has  the ability of ``to bring'' new algorithmic solutions for a problem $\rho$. However, the Demiurge  action   as the ``discoverer'' of a new solution for the problem $\rho$ involves an effort to carry out its function. There must be an energy consumption; which means that it needs to \textbf{realize work} to discover a new program $\varsigma_{i}$.
	
	Such fact has its effects on Expression \ref{eq:componente6Shann}, which states that the reachability of a program is a function of the entropy variation (represented by $\Delta H$ in Expression \ref{eq:shannonWork2} and, since Expression \ref{eq:componente6Shann}, represented by $\mathrsfs{H}(\varsigma_{i})$). However, we know (by Definitions \ref{eq:shannonWork1} and \ref{eq:shannonWork2}) that is possible to describe an entropy variation  in function of extracted work.
	
	\begin{equation}\label{eq:descoveryEntropy}
	\mathrsfs{H}(\varsigma_{i}) = \dfrac{\Delta W}{ kT\ln2}
	\end{equation}
	
	Similarly to what was made with the definition of $\mathrsfs{H}(\varsigma_{i})$, we will describe the variation of the work
	in terms of $\varsigma_{i}$ as follows:
	
	\begin{equation}\label{eq:workVar}
	\Delta W = \mathrsfs{E}(\varsigma_{i}) 
	\end{equation} 
	
	As the Demiurge $\mathfrak{D}$ is the responsible for this work and the variation of entropy is associated to the program reachability $\varsigma_{i}$ as well, the entropy variation of Expression \ref{eq:componente6Shann} we may substitute it by the right side of Expression \ref{eq:descoveryEntropy}.This results in:
	
	\begin{equation}\label{eq:workNew}
	P(\varsigma_{i}| \mathfrak{D}) = e^{W(-\mathrsfs{E}(\varsigma_{i})/{ kT})} 
	\end{equation}  
	
	The new version presented above makes the connection between the program reachability and the process of discovering new
	programs by the Demiurge $\mathfrak{D}$. Through this; we can observe that each solution to $\rho$ is associated with a determined quantity of energy expended to generate it. For the Expression \ref{eq:workNew}, the options for the energy values have their boundaries well determined (by Expression \ref{eq:componente11Shann}). Inside these limits, depending on the quantity of work  available (and the amount of work realized by the Demiurge), we may have a bigger or smaller probability of reaching a new solution.

	\section{Conclusion}\label{sec:conclusion}
	
	As stated in the introduction of this paper our goal was to set an abstraction for algorithmic discovery bounded by its reachability and its size. The idea behind the proposal is to view different programs simply as modified versions of the same basic program, this way any transformation between programs of the same size will not demand more energy to occur. However, to generate programs in a smaller size, it is necessary to apply work of the order $kT\ln2$ as stated on expression \ref{eq:descoveryEntropy}. This result agrees with the Landauer's principle \cite{berut2012experimental}.
	
	The simpler algorithm is then to build all programs of the same size and check them all, afterward switch them to an immediately smaller size. The next step of the algorithm produces the smallest program only when there is no transformation left to reduce the size unless there is an informed hypotheses search-space. This result means that generically speaking the smallest program not necessarily is the most accessible one.
	
	At this point the Demiurge cuts the knot by applying the work difference to the problem as an entropy result, thus producing an informed search on a non-informed hypotheses search-space. In fact using a gradient descent by Equation~\ref{eq:probReach}. We can envision this idea as applying a Kolmogorov complexity measure, using the Lambert $W$ function, at each transformation step of a program, seeking a path around the lower complexity programs.
	
	\subsection{Future Investigation}\label{sec:future-investigation}
	
	The presented results of this work span a series of connections to some areas that deserve further investigation. There are associations between temperature and entropy of a system in ~\cite[Section~2.1.11]{Wallace2005} through the thermodynamics's concept of microstates, and there is a study of these thermodynamics's concepts to algorithmic context~\cite{Baez2010}, but it is yet to be connected to this work.
	
	Still, in the line of states, the correlation between the current work and the minimum amount of states of a Turing machine must also be drawn. Such study is of interest not only because of the Kolmogorov complexity, one of the central arguments of this research, but also to define the theoretical lower-bound of Turing-machine states for any given program specification; this kind of information is useful, for instance, to study certain properties of specific algorithms such as the Busy Beaver~\cite{Yedidia2016}.
	
	Last but not the least, study ways to explore the Demiurge's informed search to develop programs with smaller sizes for a given output; although ambitious, it conceptually intersects with the works of Levin~\cite{levin1973,li2013introduction}, and on minimum message length~\cite{Wallace2005} and minimum description length~\cite{Gruenwald2007}.

	\section*{References}
	
	\bibliographystyle{plain} 
	\bibliography{refer}

\end{document}